%% file: main.tex
\documentclass{article}

\PassOptionsToPackage{numbers,compress}{natbib}
\usepackage[preprint]{neurips_2020}   %
\usepackage[utf8]{inputenc}   %
\usepackage[T1]{fontenc}      %
\usepackage{hyperref}         %
\usepackage{url}              %
\usepackage{xurl}             %
\usepackage{booktabs}         %
\usepackage{mathtools,amssymb}
\usepackage{amsfonts}         %
\usepackage{nicefrac}         %
\usepackage{microtype}        %
\usepackage{array,colortbl}
\usepackage{xcolor}
\usepackage{graphicx}         %
\usepackage{algorithm,algorithmicx,algpseudocode}
\usepackage[capitalise,noabbrev]{cleveref}
\usepackage{caption}
\usepackage{subcaption}
\usepackage{etoolbox}
\usepackage{enumitem}
\usepackage[section]{placeins}  %

\newcommand{\bb}{\mathbf{b}}

\usepackage{tikz}
\usetikzlibrary{arrows.meta,positioning,shapes.geometric,shapes.misc,fit,backgrounds,calc,decorations.pathreplacing}
\usepackage{pgfplots}
\pgfplotsset{compat=1.18}
\usepackage{siunitx}
\makeatletter
\renewcommand{\@notice}{}
\makeatother

\hypersetup{colorlinks=true, linkcolor=blue!60!black, citecolor=green!40!black, urlcolor=blue!60!black,
  pdftitle={Harbormaster: Evidence-Gated, Replay-Safe Maritime Anomaly Detection on AWS},
  pdfauthor={Arun Sharma}}
\pdftrailerid{}
\newtheorem{proposition}{Proposition}
\newenvironment{proof}{\par\noindent\textit{Proof.}\ }{\hfill$\square$\par\medskip}
\newcommand{\scope}[1]{\texttt{\scriptsize #1}}
\newcommand{\vmax}{v_{\max}}
\newcommand{\vreq}{v_{\mathrm{req}}}

\newcommand{\pphys}{p_{\mathrm{phys}}}
\newcommand{\Pdata}{P_{\mathrm{data}}}
\DeclareMathOperator{\clip}{clip}
\DeclareMathOperator{\logistic}{logistic}

\title{Harbormaster: Evidence-Gated, Replay-Safe Maritime Anomaly Detection on AWS}

\author{Arun Sharma\\University of Minnesota\\\texttt{arunshar@umn.edu}}

\begin{document}

\maketitle

\begin{abstract}
Ships broadcast their positions through the Automatic Identification System (AIS), and those reports can be false or missing. An operator who acts on an anomaly alert needs that alert to be attributable, reviewable, and recoverable after a failure. This paper describes Harbormaster, a production-shaped system on AWS that follows three rules. Physics checks run on every report before any learned model runs. The DynamoDB read store is an idempotent projection of PostgreSQL, and a guard on the log sequence number (LSN) of each change protects every write to it. A candidate model must pass a holdout gate and a shadow comparison before a canary gives it traffic, and a burn-rate check guards each canary step. The paper proves that replaying any prefix or suffix of the change log leaves each projected key at the value of its highest applied LSN. It also names the effects that this result does not cover, such as repeated cache invalidations and repeated audit rows. In a bounded AWS window, a one-hour soak returned 35,999 HTTP 200 responses to 36,000 requests, with a 95th-percentile (p95) client latency of 142.751~ms. In a separate bounded AWS run of 900~s, the stream path received a burst of 400~records/s, and its consumer lag later drained to zero. Every number in the paper carries a label that says where it was measured, and the paper lists the parts of the design that were never built.
\end{abstract}

\section{Introduction}\label{sec:intro}

Ships report their identity, position, speed, and course through the Automatic Identification System (AIS) \citep{riveiro2018maritime}. A vessel can switch its transponder off \citep{welch2022unseen}, and a report can also carry a position that no vessel could reach. Two historical cases show what these anomalies look like. In one track, two consecutive reports are 207.846~km and 815~s apart, which implies a speed of 255.026~m/s. In another track, one vessel stays silent for 14~days, 18~hours, and 59~minutes. Both cases come from a read-only case analysis, and Harbormaster did not score either of them.

An anomaly alert is useful only when a person can act on it safely. The alert must be attributable, so the operator can see which rule or model produced it and from which inputs. It must be reviewable, so an analyst can record a judgment without changing the automated score. It must also be recoverable, so the derived state stays correct after a crash, a redelivered message, or a replay. Harbormaster is built around one design question. How can a cloud pipeline turn untrusted AIS reports into alerts that stay attributable, reviewable, and recoverable when messages repeat and when models change?

The system makes three design choices. A fast physics check runs before the heavier model, and the deterministic scorer needs no learned model to return a result. PostgreSQL holds the authoritative registry, and its DynamoDB read store is an idempotent projection of the change log. A candidate model must pass gates that produce evidence before it receives traffic. The system is production-shaped, and its AWS parts ran only in attended, time-boxed windows. That environment is no longer running.

Every number in the paper carries a scope label. The label names a bounded AWS window, a local run, a synthetic drill on real project code, or the public repository. A reader can rerun only the public test suite, the local change data capture (CDC) tests, and the public drill scripts, and \cref{app:repro} states that boundary. The paper makes five contributions.

\begin{itemize}[leftmargin=*,itemsep=2pt,topsep=2pt]
\item It describes a physics-first scoring path in which a required-speed check and a stream-layer physics gate run before any learned model, and the gap detector uses space-time prisms to merge candidate gaps (\cref{sec:scoring}).
\item It gives a replay-safe CDC path. Its DynamoDB projection accepts a change only when that change's LSN is newer, and the paper proves that replay leaves each key at its highest applied LSN under stated assumptions (\cref{sec:cdc}).
\item It specifies an evidence-gated promotion ladder of holdout, reward-hacking, shadow, and canary checks, and it tests the holdout, shadow, and canary steps in synthetic drills (\cref{sec:promotion}).
\item It documents a scale-to-zero design for an asynchronous model endpoint, together with two metric-dimension bugs that each disabled one scaling path without an error (\cref{sec:scale}).
\item It reports a scoped evaluation in which every result names where it ran, and it lists the parts of the design that were never built (\cref{sec:eval,sec:lessons}).
\end{itemize}

\section{Related Work}\label{sec:related}

\paragraph{Maritime anomaly detection.}
\citet{riveiro2018maritime} review maritime anomaly detection work on data, methods, systems, and user aspects. TREAD \citep{pallotta2013tread} learns shipping routes from AIS and flags vessels that leave them. \citet{welch2022unseen} map where fishing vessels disable their AIS transponders, which shows that a reporting gap can itself be the behavior of interest. Harbormaster's corridor detector follows the route-learning idea, and its lane builder uses the Douglas-Peucker algorithm \citep{douglas1973rdp} and HDBSCAN \citep{campello2013hdbscan}.

\paragraph{Space-time prisms and trajectory gaps.}
Time geography describes the set of places that a moving object can reach between two observed points \citep{hagerstrand1970people}, and \citet{miller1991prism} brought the space-time prism into geographic information systems. \citet{sharma2022rendezvous} used prisms to bound the region where two vessels could have met during a trajectory gap. \citet{sharma2024physics} detected abnormal trajectory gaps with physics-based bounds on vessel motion, through the STAGD gap detector and a dynamic region merge. These two trajectory-gap studies share an author with this paper. Harbormaster uses the prism test and the gap detection of that work inside a serving system, and it proposes no new detection method.

\citet{sharma2022tighter} tighten possible-rendezvous regions on spatial networks through time slicing and bidirectional pruning. Their method bounds where objects could have met during reporting gaps, while Harbormaster's serving path scores incoming vessel reports.

\paragraph{Learned spatial models and physical knowledge.}
\citet{sharma2025pidpm} study trajectory anomaly detection with physics-informed diffusion and kinematic constraints. \citet{yang2025gcdm} integrate road-map attributes into conditional diffusion for vehicle-trajectory generation, including geometric and dynamic properties. \citet{ghosh2024kicdpm} combine Kriging-based conditioning and variogram-based regularization for regional sea-level downscaling. These studies connect learned spatial models to domain knowledge across distinct prediction and generation tasks. Harbormaster addresses the serving problem of applying evidence gates and replay-safe state updates around model artifacts. Its reported AWS runs use a numpy surrogate for the data term, and its diffusion training plane remains design only.

\paragraph{Cloud geospatial analysis and parallel mining.}
\citet{sharma2018webglobe} present WebGlobe, a browser-based climate analysis system that moves computation to cloud servers. \citet{sharma2022stdmsurvey} survey spatiotemporal pattern families, including outliers, and their sequential and parallel formulations. These works connect spatial analysis to the infrastructure needed to process large datasets and expose results to users. Harbormaster studies operational correctness for a narrower maritime serving path through replay handling, artifact gates, and review records.

\paragraph{Stream processing and change data capture.}
Apache Flink \citep{carbone2015flink} runs stateful stream programs as pipelined dataflows, and its snapshots keep keyed state consistent after a failure \citep{carbone2017state}. MillWheel \citep{akidau2013millwheel} and the Dataflow model \citep{akidau2015dataflow} define event time and watermarks for late data. Kafka \citep{kreps2011kafka} carries the change events that Debezium \citep{debezium} reads through PostgreSQL logical replication \citep{postgreslogical}. DBLog \citep{andreakis2020dblog} uses watermarks to interleave table snapshots with log events. Harbormaster handles the overlap between Debezium's snapshot rows and streamed changes with a simpler rule, which applies every snapshot row at LSN zero. \citet{kleppmann2017ddia} treats derived data as a view of an ordered log. This paper states that view as a proposition for one sink built on DynamoDB conditional writes \citep{elhemali2022dynamodb}.

\paragraph{Machine learning operations and gating.}
\citet{sculley2015hidden} describe the maintenance debt that builds up around machine learning code. The ML Test Score \citep{breck2017mltest} turns deployment readiness into a rubric of tests, and TFX \citep{baylor2017tfx} places data and model validation inside the training pipeline. Engineers interviewed by \citet{shankar2022operationalizing} describe a shadow stage, in which a new model makes live predictions that users do not see. That stage comes before a rollout to a small share of live users. Canary analysis \citep{davidovic2018canary} compares a new version with the old one before a full rollout. Gandalf \citep{li2020gandalf} correlates fault signals with ongoing rollouts and stops a rollout that it judges unsafe. Harbormaster's canary uses the multiwindow burn-rate alerts of \citet{thurgood2018alerting}, which build on the service level objectives of site reliability engineering \citep{beyer2016sre}. Its holdout gate uses AUC \citep{fawcett2006roc} and the continuous ranked probability score (CRPS) \citep{gneiting2005crps,gneiting2007scoring}, and its endpoint follows the serverless goal of holding no idle capacity \citep{jonas2019serverless}.

\paragraph{Drift and reward hacking.}
Dataset shift \citep{quinonero2009shift} and concept drift \citep{gama2014drift} explain why a model that passed its holdout can degrade after deployment. \citet{rabanser2019failing} compare shift detectors and find that two-sample tests work best on the outputs of a pretrained classifier. Harbormaster's input check pairs the Population Stability Index \citep{yurdakul2020psi} with the Kolmogorov-Smirnov statistic \citep{massey1951ks}. \citet{amodei2016concrete} name reward hacking as a concrete safety problem, and \citet{skalse2022reward} give it a formal definition. Harbormaster's probe checks one narrow form of it, in which a preference-tuned candidate raises its reward while it respects physics less.

\section{System Overview}\label{sec:overview}

\input{figures/fig_architecture}

Harbormaster splits its work into two planes (\cref{fig:architecture}). The training plane belongs on an off-cloud research GPU cluster, and this paper treats it as design only. The serving plane runs on AWS, and its parts ran in bounded windows. No trained checkpoint was promoted into AWS serving, so the model endpoint served a labeled demo stand-in image.

\paragraph{Five workflows.}
The design has five workflows, and each one has its own trigger. Workflow~1 checks each incoming vessel report, and Workflow~2 saves an analyst's review label. Workflow~3 carries committed registry changes into the serving copies, and Workflow~4 builds shipping-lane context from historical batches. In the design, Workflow~5 submits a separate model job to an asynchronous endpoint, and its job service, dispatch worker, and result collector are not built. The ingestor replays only a synthetic replay fixture of 2,709 events for 8 vessels over 6~hours, and no real-time AIS source was connected.

\paragraph{Two brokers and the stream job.}
Position reports and database changes travel on separate brokers. Provisioned Amazon Kinesis \citep{awskinesis} carries position reports, and Managed Flink 1.20 \citep{awsmanagedflink} consumes them with state keyed by the vessel's MMSI (Maritime Mobile Service Identity). Kafka carries Debezium change events. The AWS windows used Amazon MSK Serverless, and the local stack runs a single-node KRaft Kafka that Strimzi manages on kind. The Flink job keeps the last five fixes for each vessel, and it updates that history before it applies the physics gate. For each report, one keyed operator computes features and applies the gate. For each report that passes, it writes a feature item to DynamoDB and posts a signed request to the scorer.

\paragraph{The scoring front door.}
The scorer is the entry point for every scoring request, and the project calls it the front door. It is a FastAPI service on ECS Fargate, and the Terraform at the reviewed commit runs it on Fargate Spot capacity. Requests reach it through an API Gateway HTTP API with IAM authorization and a VPC Link that resolves the service through Cloud Map, so the path has no standing load balancer. The deterministic scorer can call an asynchronous SageMaker endpoint for the data term $\Pdata$. The ECS task on AWS set no endpoint, so $\Pdata$ came from a numpy surrogate there. In the design, review cases go to a PostgreSQL table, and the Streamlit console that reads them ran only locally. On AWS, the review queue fell back to process memory, and \cref{sec:lessons} explains why.

\paragraph{Batch lake and control.}
In the design, Workflow~4 runs PySpark on EMR Serverless and writes lane tables to Apache Iceberg \citep{apacheiceberg}. EMR Serverless stops itself after 15 idle minutes. In a bounded AWS window, a valid backfill job succeeded, and a job with invalid data failed before any write. The lane builder uses a Douglas-Peucker tolerance of 200~m and an HDBSCAN minimum cluster size of 2. The publisher from lane tables to a serving reference is assumed integration, so the scorer reads a small synthetic lane reference. Terraform manages the infrastructure as code \citep{rahman2019iac}, and every phase sits behind a toggle that an apply turns off for teardown.

\section{Physics-First Scoring}\label{sec:scoring}

The scorer uses physics as its first source of evidence. Each rule in this section is a closed-form function of the reported fixes and the reference data, so an analyst can recompute any score from the constants in the public code.

\paragraph{Required speed and the stream gate.}
A fix $i$ has latitude $\phi_i$, longitude $\lambda_i$, and time $t_i$. For consecutive fixes $A$ and $B$, the system computes the haversine distance $d_{AB}$ on a sphere of radius $R$ and the speed $\vreq$ that the move requires.
\begin{align}
d_{AB} &= 2R\arcsin\sqrt{\sin^2\!\tfrac{\phi_B-\phi_A}{2}+\cos\phi_A\cos\phi_B\,\sin^2\!\tfrac{\lambda_B-\lambda_A}{2}}, \label{eq:haversine}\\
\vreq &= \frac{d_{AB}}{\max(t_B-t_A,\,\epsilon)}, \qquad \pphys = \min\!\left(1,\ \frac{\vmax}{\max(\vreq,\,\epsilon)}\right). \label{eq:pphys}
\end{align}
The speed cap is $\vmax=25$~knots, which is 12.861~m/s. The constant $\epsilon=10^{-6}$ is in seconds inside the first maximum and in m/s inside the second. The plausibility $\pphys$ equals 1 when a vessel at the cap could make the move, and it falls as the required speed rises. The Flink job forwards a report to the scorer only when $\pphys\ge 0.3$.

This gate delays the scoring of fast jumps, and it can also drop them. It drops every report whose required speed from the previous stored fix exceeds $\vmax/0.3$, which is about 83~knots. The scorer therefore does not see that report when it arrives. The Flink job still adds the dropped fix to the vessel's five-fix history. The next report that passes the gate carries that history to the scorer, and the speed detector uses the fastest segment in it. A jump between about 83 and 300~knots is therefore scored late, when a later report passes the gate. The jump goes unscored when the vessel's next four reports also fail the gate or never arrive, because the history then no longer holds both fixes of the jump. A jump above $12\,\vmax$, which is 300~knots, makes every later request that still carries both of its fixes fail with HTTP 422.

\input{figures/fig_prism}

\paragraph{Space-time prism feasibility.}
The space-time prism \citep{hagerstrand1970people,miller1991prism} is the set of places that a vessel could occupy between two fixes if it never exceeded $\vmax$. Its footprint on the map is an ellipse with foci at $A$ and $B$, which \cref{fig:prism} draws. The reach length $L$ sets the size of that ellipse.
\begin{align}
L &= \vmax\,(t_B-t_A), \qquad \text{feasible} \iff d_{AB}-\delta\le L, \label{eq:prism}\\
a &= \tfrac{1}{2}L, \qquad b=\tfrac{1}{2}\sqrt{\max\!\big(0,\,L^2-d_{AB}^2\big)}. \label{eq:axes}
\end{align}
The terms $a$ and $b$ are the semi-major and semi-minor axes when the pair is feasible, and the code allows a tolerance of $\delta=10^{-6}$~m. An infeasible pair means that no vessel under the cap could have made the move. On the single-vessel scoring path, the scorer computes this prism for the latest pair of fixes, and the fused score does not read the result. The feasibility test for that pair matches the speed test $\vreq\le\vmax$ up to the tolerance $\delta$, so the speed detector already scores the same condition. The prism geometry does matter inside the gap detector, which merges candidate gaps whose prism bounding rectangles overlap.

\paragraph{Speed severity.}
The speed detector uses the fastest segment in the supplied history. Let $v^\star$ be the largest $\vreq$ over consecutive fixes.
\begin{align}
s_{\mathrm{speed}} &= \begin{cases} 0, & v^\star\le\vmax,\\[2pt] \min\!\big(1,\ (v^\star/\vmax-1)/3\big), & v^\star>\vmax.\end{cases} \label{eq:speed}
\end{align}
The severity saturates at $4\,\vmax$. The scorer rejects a request with HTTP 422 when any segment exceeds $12\,\vmax$, and that request receives no score.

\paragraph{Gap severity.}
A silence longer than the 600~s coverage threshold opens a candidate gap. The gap detector follows the abnormal gap detection of \citet{sharma2024physics}, whose abnormal gap measure bounds a vessel's possible movement during a gap with a space-time prism. Harbormaster keeps the name abnormal gap measure (AGM) for its own formula, which blends kinematic implausibility with a data term.
\begin{align}
\mathrm{AGM} &= w\,(1-\pphys)+(1-w)\,\Pdata, \qquad w=0.6, \label{eq:agm}\\
\Pdata &= \logistic\!\left(\frac{\ln(1+d)+0.001\,\Delta t-12}{3}\right), \label{eq:pdata}\\
s_{\mathrm{gap}} &= \clip\!\big(0.5\,f_{\mathrm{dur}}+0.5\,\mathrm{AGM}\big), \qquad f_{\mathrm{dur}}=\clip\!\left(\frac{\Delta t-600}{7200-600}\right). \label{eq:gap}
\end{align}
Here $d$ is the gap distance in meters, $\Delta t$ is the gap duration in seconds, and $\clip$ limits a value to $[0,1]$. The term $\pphys$ applies the plausibility rule of \cref{eq:pphys} to the two fixes that bound the gap. The detector merges candidate gaps whose prisms overlap and scores the earliest gap in each merged group. The scorer then keeps the group with the largest AGM. \Cref{eq:pdata} is a numpy surrogate. The target design replaces it with a diffusion model trained on the off-cloud cluster, and on AWS the $\Pdata$ term came from the surrogate.

\paragraph{Corridor severity and fusion.}
The corridor detector measures the distance $d_c$ in meters from the current fix to the nearest lane segment. It can also add an unexpected-turn reason when the history is long enough and the vessel turns sharply far from any waypoint. A watchlist lookup also runs on every request outside the plan, and a watchlist hit or a sanctions hit each adds a reason with a fixed severity. The scorer fuses the severities $s_i$ of all reasons with a noisy-OR rule.
\begin{align}
s_{\mathrm{corr}} &= \clip\!\left(\frac{d_c-2000}{6000-2000}\right), \label{eq:corridor}\\
S &= 1-\prod_{i}(1-s_i), \qquad c = 1-2\min(S,\,1-S), \label{eq:fusion}\\
\text{review} &\iff c<0.7\ \lor\ S\ge 0.6. \label{eq:review}
\end{align}
The noisy-OR treats the reasons as separate pieces of evidence, so one strong reason can raise the score close to 1 by itself. The confidence $c$ is 1 at the ends of the score range and 0 at $S=0.5$. The confidence $c$ depends only on $S$, so the clause $c<0.7$ holds exactly when $0.15<S<0.85$. \Cref{eq:review} therefore sends a case to review exactly when $S>0.15$, and the clause $S\ge 0.6$ changes the outcome only when $S\ge 0.85$. The analyst's label is stored next to the case, and it never overwrites the automated score.

\paragraph{The deterministic planner.}
The planner chooses detectors from the length of the history, and it uses no language model. It always runs the corridor detector. It adds the prism, speed, and validate steps when the history holds at least one prior fix. It adds the gap detector when the history holds at least three. \Cref{alg:planner} in \cref{app:method} lists the full rule. The validate step runs a kinematic region gate that rejects anchors that imply more than $1.05\,\vmax$. The single-vessel scoring path passes it no anchors, so the step has no effect on that path.

\section{Replay-Safe Change Data Capture}\label{sec:cdc}

\input{figures/fig_cdc}

PostgreSQL 16 on RDS holds the authoritative vessel, watchlist, and sanctions tables, and every other copy of that data is derived (\cref{fig:cdc}). Debezium runs inside Kafka Connect and reads committed changes through logical replication \citep{postgreslogical,debezium}. Kafka delivers each change event at least once, so a consumer restart or a replay can deliver the same event again. In the code, a registry worker takes three actions for each event. It attempts a conditional write of a whole item to the DynamoDB read store, removes the stale Redis cache entry, and appends a row to an Iceberg audit table.

\paragraph{The commit protocol.}
The worker applies every event in a batch, flushes every sink, and only then commits its Kafka offsets. When a sink fails, the offsets stay uncommitted, and Kafka redelivers the batch. The worker skips Kafka tombstones, which are compaction metadata. A source delete becomes a soft-delete item that the worker writes at the delete's own LSN.

\paragraph{The LSN guard.}
Each item in the read store keeps the LSN of the change that last wrote it. The worker writes with a DynamoDB conditional put \citep{elhemali2022dynamodb}. The condition compares the stored LSN $\ell^\star(k)$ of key $k$ with the incoming LSN $\ell$.
\begin{align}
\textsc{apply}(k,\ell,v) &\iff \ell^\star(k)=\bot\ \lor\ \ell^\star(k)<\ell. \label{eq:guard}
\end{align}
Here $\bot$ means that the item has no stored LSN yet. A rejected write signals a stale or duplicate delivery, so the worker records the event as not applied and raises no error. The worker applies rows from Debezium's initial snapshot at $\ell=0$, and the audit row keeps the snapshot's own LSN. A snapshot row therefore applies only to an empty key, and every streamed change outranks it.

\begin{proposition}[Replay convergence]\label{prop:replay}
Consider a change log of events $e_j=(k_j,\ell_j,v_j)$ and a projection that starts empty. Assume that (i) for each key, an LSN identifies at most one value, (ii) the check and the write in \cref{eq:guard} happen atomically for each item, (iii) each accepted write replaces the whole item, and (iv) every delivered event maps to a valid item key. Let $\mathcal{D}$ be any finite sequence of deliveries drawn from the log, in any order and with any repetition. After the worker processes $\mathcal{D}$, each key $k$ that appears in $\mathcal{D}$ holds the value of the event with the highest LSN for $k$ in $\mathcal{D}$, and every other key stays empty. In particular, replaying any prefix or suffix of the log after a full pass leaves each projected key at the value of its highest applied LSN.
\end{proposition}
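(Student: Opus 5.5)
The plan is an induction on the length of the delivery sequence $\mathcal{D}=(d_1,\dots,d_n)$, with a per-key invariant. Let $\mathcal{D}_m$ be the first $m$ deliveries. For each key $k$, let $M_m(k)$ be the highest LSN among deliveries for $k$ in $\mathcal{D}_m$, or $\bot$ if there are none. The invariant $I_m$ says two things. First, every key with $M_m(k)=\bot$ is empty. Second, every other key stores LSN $\ell^\star(k)=M_m(k)$ together with the value of an event for $k$ carrying that LSN. Assumption (i) makes this value well defined: all deliveries for $k$ with LSN $M_m(k)$ carry the same value, even when they are repeated copies. $I_0$ holds because the projection starts empty.

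For the inductive step, let $d_{m+1}=(k,\ell,v)$. Assumption (iv) guarantees that this delivery touches exactly the item for key $k$. Assumption (ii) lets us treat the guarded put in \cref{eq:guard} as one atomic step, so no interleaving needs analysis. Every key other than $k$ keeps both its item and its maximum, so $I_{m+1}$ holds for those keys. For key $k$ there are three cases.
\begin{itemize}[leftmargin=*,itemsep=1pt,topsep=2pt]
\item If $\ell^\star(k)=\bot$, the write applies. By (iii) the item becomes exactly $(\ell,v)$, and $M_{m+1}(k)=\ell$.
\item If $\ell^\star(k)<\ell$, the write applies. The item becomes $(\ell,v)$ by (iii), and $M_{m+1}(k)=\ell$ because $\ell^\star(k)=M_m(k)$.
\item If $\ell^\star(k)\ge\ell$, the write is rejected and the item is unchanged. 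Since $M_{m+1}(k)=\max(M_m(k),\ell)=M_m(k)$, the invariant holds. When $\ell=\ell^\star(k)$, (i) shows the stored value already equals $v$.
\end{itemize}
Assumption (iii) matters in the two applying cases: no field left over from an older write can survive, so the item is a function of $(\ell,v)$ alone.

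Taking $m=n$ gives the main claim. The prefix and suffix statement follows by concatenation. Let $\mathcal{D}$ be a full pass followed by any prefix or suffix of the log. Every event in the replayed part already appeared in the full pass, so the per-key maximum LSN over $\mathcal{D}$ equals the maximum over the full pass. The stored value is therefore unchanged. Snapshot rows at $\ell=0$ fit the same argument because LSNs are totally ordered. Assumption (i) still has to hold at LSN zero, so two snapshot rows for one key must agree.

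I expect the main obstacle to be the interface between the abstract guard and the worker's real behavior, more than the induction itself. Three points need care. First, a rejected put must leave the item bit-for-bit unchanged. Second, soft deletes must count as ordinary whole-item writes at their own LSN. Third, skipped tombstones must not count as deliveries at all. I would handle all three by defining $\mathcal{D}$ as the sequence of events that reach the conditional put. Redis invalidations and Iceberg audit rows fall outside this sequence, which matches the scope the paper sets for the result.
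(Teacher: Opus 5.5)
Your proposal is correct and follows the paper's proof essentially step for step. Both run an induction on the length of $\mathcal{D}$ with the invariant that each seen key stores its maximum delivered LSN and the matching value while unseen keys stay empty, split into the same three guard cases with (i) settling the tie $\ell=\ell^\star(k)$ and (ii), (iii) ruling out interleaving and field mixing, and obtain the prefix and suffix claim because the final state depends only on the set of delivered events; your added remarks on snapshot rows at LSN zero, soft deletes, and skipped tombstones stay within the scope the paper sets around the proposition.
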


\Cref{app:proof} proves the proposition by induction on the length of $\mathcal{D}$. The invariant is that each key seen so far stores the largest LSN among its delivered events, together with that event's value.

The commit protocol redelivers a batch after any transient sink failure. The delivered set therefore eventually covers every event that maps to a valid key. This holds as long as the worker keeps making progress and Kafka keeps each event until the worker commits it. Each key then holds the value of its highest-LSN event, which is its latest committed row or a soft-delete marker. This result also needs a log that holds every committed change, and that condition fails if the replication slot is dropped or invalidated. A re-snapshot after such a loss cannot repair a key that already holds a streamed LSN, because the worker applies snapshot rows at $\ell=0$ and the guard rejects them. A re-snapshot that carries a different value at LSN zero also breaks assumption (i). The guard then keeps the first snapshot row, and any streamed change still outranks it.

\paragraph{What the guard does not cover.}
The proposition covers one sink and one key at a time, and the paper claims nothing stronger. Redis invalidations can repeat after a redelivery. The Iceberg audit table appends every delivered event, so a redelivered batch appends its rows again. The worker also buffers audit rows until the batch flush. A crash in the middle of a batch can lose those rows, and the redelivery then records events that were already applied as not applied. The worker also skips a content error, which is an event that no key mapping can apply. It audits that event as not applied and commits past it, so redelivery never repairs that key. A reader can also see one key updated before another, because the guard orders changes per key only. The paper therefore describes this path as an idempotent, LSN-guarded projection. It does not claim that each change takes effect once across every destination.

\paragraph{Slot-lag monitoring.}
A logical replication slot makes PostgreSQL retain write-ahead log (WAL) until the consumer confirms it, so a stalled consumer makes the database hold disk. The monitor measures slot lag in bytes with \texttt{pg\_wal\_lsn\_diff}, and a CloudWatch alarm fires at 200~MiB.
\begin{align}
\Delta_{\mathrm{slot}}(t) &= \mathrm{LSN}_{\mathrm{current}}(t)-\mathrm{LSN}_{\mathrm{confirmed}}(t), \label{eq:slotlag}\\
\text{alarm at minute } t &\iff \min_{j\in\{0,1,2\}} M_{t-j}\ge\theta_{\mathrm{slot}}. \label{eq:slotalarm}
\end{align}
Here $M_t$ is the largest slot lag that the monitor publishes in one-minute period $t$, so the alarm needs three consecutive periods at or above the threshold. The threshold $\theta_{\mathrm{slot}}$ equals $200\times 1024\times 1024$ bytes, which is 209,715,200~bytes. The monitor measures lag from the slot's restart position when the slot has no confirmed position. A slot can also fall behind while its captured tables are quiet, because the connector confirms a position only when it receives a change. A Debezium heartbeat query that upserts one row handles that case, and \cref{sec:eval-cdc} reports its effect.

\section{Evidence-Gated Promotion}\label{sec:promotion}

\input{figures/fig_promotion}

A candidate model reaches traffic only through the ladder in \cref{fig:promotion}. The ladder has up to four steps, which are a holdout gate, a reward-hacking probe, a shadow comparison, and a weighted canary. A failure at any step stops the candidate. A candidate cannot skip the holdout gate, the shadow comparison, or the canary. The probe runs only when the caller supplies a probe result, so supervised retraining candidates skip it.

\paragraph{Holdout gate.}
The holdout gate requires an AUC of at least 0.85, a mean CRPS of at most 1.0, and a calibration ratio inside $[0.8,1.2]$. The AUC compares binary anomaly labels with the candidate's anomaly scores. The model also predicts a Gaussian $\mathcal{N}(\mu,\sigma^2)$ for each observation $y$, and CRPS has a closed form for that case \citep{gneiting2005crps}.
\begin{align}
\mathrm{CRPS}\big(\mathcal{N}(\mu,\sigma^2),\,y\big) &= \sigma\left[z\,\big(2\Phi(z)-1\big)+2\varphi(z)-\frac{1}{\sqrt{\pi}}\right], \qquad z=\frac{y-\mu}{\sigma}, \label{eq:crps}\\
\rho &= \frac{\overline{e^2}}{\overline{\sigma^2}}, \qquad e=y-\mu. \label{eq:calib}
\end{align}
Here $\Phi$ and $\varphi$ are the standard normal distribution and density functions, and a bar denotes a mean over the holdout set. The CRPS threshold is in the units of $y$. A ratio $\rho$ above 1 means that the errors are larger than the predicted variance, and a ratio below 1 means that they are smaller. Every threshold must pass, so a strong AUC cannot make up for poor calibration.

\paragraph{Reward-hacking probe.}
The probe compares a preference-tuned candidate with the baseline on a held-out set. Let $\bar R$ be the mean total reward, $u$ the rate of structural-constraint violations, and $\bar s$ the mean structural term. A sample violates the structural constraint when its structural term falls below zero.
\begin{align}
\text{block} &\iff \bar R_{\mathrm{cand}}>\bar R_{\mathrm{base}}\ \land\ \big(u_{\mathrm{cand}}>u_{\mathrm{base}}\ \lor\ \bar s_{\mathrm{cand}}<\bar s_{\mathrm{base}}\big). \label{eq:probe}
\end{align}
The rule blocks a candidate whose reward rises while its physical consistency gets worse. The second clause catches a candidate that keeps the same number of violations and makes each one more severe.

\paragraph{Shadow comparison.}
The shadow step scores the same $N$ inputs with the champion and the candidate. It passes when the mean absolute score difference is at most a limit $\delta_{\mathrm{sh}}$ that the caller sets, and the drills set it to 0.05.
\begin{align}
\frac{1}{N}\sum_{j=1}^{N}\big|s^{\mathrm{champ}}_j-s^{\mathrm{cand}}_j\big| &\le \delta_{\mathrm{sh}}. \label{eq:shadow}
\end{align}

\paragraph{Canary and burn rate.}
The canary moves the candidate through the weights 5, 25, 50, and 100. At each weight, a burn-rate check reads the score-success service level objective (SLO). Consider a window $W$ with $\mathrm{bad}_W$ failures among $\mathrm{total}_W$ requests and an SLO target $T$. The burn rate divides the observed bad ratio by the allowed error fraction (1 minus the SLO target) \citep{thurgood2018alerting}.
\begin{align}
B_W &= \frac{\mathrm{bad}_W/\mathrm{total}_W}{1-T}, \qquad \text{tier fires} \iff B_{W_{\mathrm{long}}}>\theta\ \land\ B_{W_{\mathrm{short}}}>\theta. \label{eq:burn}
\end{align}
A window with no requests has $B_W=0$. The system uses three tiers. A burn rate above 14.4 over 1~h and 5~min pages, and a burn rate above 6 over 6~h and 30~min also pages. A burn rate above 3 over 1~day and 2~h raises a warning. The two paging tiers match the recommended tiers of \citet{thurgood2018alerting}, and the warning tier is the project's own choice. A page at any canary weight reverts the full weight to the champion in one action, and a warning does not revert.

\paragraph{Drift and the feedback loop.}
The code includes three detectors that watch for drift after promotion. They cover input shift, calibration, and concept drift, and the concept-drift detector works only through a proxy signal. The input check compares a reference sample with a current sample for each numeric feature, and \cref{app:drift} gives its test and thresholds.

The feedback loop ends at preference data. A builder turns analyst labels of \texttt{incorrect} into offline preference records, and no code path retrains a model from them. No reinforcement-learning fine-tune ran, so no real candidate has reached the probe. The live endpoint allows one variant, so the canary is traffic-inert there. The shadow and canary results in \cref{sec:eval-gates} therefore come from synthetic drills.

\section{Scale-to-Zero Serving}\label{sec:scale}

In the design, the heavier model runs behind a SageMaker asynchronous inference endpoint. The endpoint holds a request queue, so it can accept a request while it has no instance and serve that request after it wakes. Its autoscaling target allows between zero and one instance, and two policies share that target. Both policies use \texttt{EndpointName} as their only metric dimension.
\begin{align}
\frac{\mathrm{backlog}(t)}{n_{\mathrm{inst}}(t)} &\rightarrow 5 \quad (n_{\mathrm{inst}}\ge 1), \qquad n_{\mathrm{inst}} \gets 1 \iff \min\big(\bar H_{k-1},\,\bar H_k\big)\ge 1 \quad (n_{\mathrm{inst}}=0). \label{eq:scale}
\end{align}
\begin{itemize}[leftmargin=*,itemsep=2pt,topsep=2pt]
\item \textbf{Track.} A target-tracking policy steers the metric \texttt{Approximate\allowbreak Backlog\allowbreak Size\allowbreak Per\allowbreak Instance} toward a setpoint of 5. When the backlog empties, it scales the endpoint in to zero instances.
\item \textbf{Wake.} Target tracking cannot see the move from zero instances, so a separate step policy handles it. The step policy sets the capacity to exactly 1 when the average $\bar H_k$ of \texttt{Has\allowbreak Backlog\allowbreak Without\allowbreak Capacity} is at least 1 in two consecutive 60~s periods, and its cooldown is 60~s.
\end{itemize}
The target caps capacity at one instance, so the tracking policy in practice only scales the endpoint in to zero.

\paragraph{Two dimension bugs.}
Both policies had a bug in their CloudWatch metric dimensions. An AI audit session compared the SageMaker Terraform with AWS's own canonical examples and returned 19 numbered findings. The session recommended changes and edited nothing itself. Its first finding, which the public repository's incident log records as P25, was that the target-tracking metric specification lacked the \texttt{EndpointName} dimension. SageMaker publishes the metric under that dimension, so the policy's alarms queried a series that does not exist. Those alarms never received data, and scale-in to zero could never fire. The audit found the gap from the code alone, before any live symptom appeared, and the fix added the dimension.

In a bounded AWS window, a live check of that fix exposed a second bug, P26. The wake alarm carried an extra \texttt{VariantName} dimension, and SageMaker does not publish \texttt{Has\allowbreak Backlog\allowbreak Without\allowbreak Capacity} under it. The alarm received no datapoints for more than 15~minutes while a real backlog waited, so the endpoint would never have woken from zero. P26 is not one of the 19 audit findings, and its fix removed the extra dimension. Both bugs left every resource present and every apply clean. Neither bug raised an error, and only a live check of the scaling behavior exposed P26.

\paragraph{Evidence on the endpoint.}
In a bounded AWS window, a controlled drill woke the endpoint from zero instances. The backlog alarm fired, capacity moved to one, and the same request returned a score. The endpoint then returned to zero instances. The record gives no wake time, so this paper reports none. The endpoint served a labeled demo stand-in image, and AWS rejected a second variant on the asynchronous endpoint. The endpoint therefore stays single-variant.

\paragraph{The Kubernetes path.}
A KEDA ScaledObject on EKS gives a second scale-from-zero path for a serving deployment. This path ran once in a bounded AWS window and was then torn down, and ECS stayed the front door. \Cref{app:k8s} reports that test.

\section{Evaluation}\label{sec:eval}

This section reports what was measured and where it ran. \Cref{tab:results} lists each result with its scope label and the kind of artifact that holds it. BOUNDED-AWS-WINDOW marks a measurement on real AWS resources during one attended, time-boxed window, and its evidence is private. LOCAL marks a run on a local machine or a local stack. SYNTHETIC-DRILL marks a deterministic drill that runs real project code on synthetic inputs without any infrastructure. PUBLIC-REPO marks a result that the public repository states or reproduces. No experiment in this section measures detection accuracy. \Cref{app:misc} reports the scoring-kernel benchmark, the audit tally, and the test suites from \cref{tab:results}.

\begin{table}[t]
\caption{Each measured result appears with its scope. PUBLIC-REPO means that the public repository states or reproduces the result. A reader can rerun the public test suite, the local CDC tests, and the public scripts for the skew, canary, and slot drills. The soak row carries PUBLIC-REPO because the public README restates it, and it cannot be rerun. The AWS rows come from bounded windows whose evidence is private. The row groups follow \cref{sec:scale}, \cref{sec:eval-stream,sec:eval-soak,sec:eval-cdc,sec:eval-gates}, and \cref{app:misc}.}
\label{tab:results}
\centering
\footnotesize
\setlength{\tabcolsep}{3pt}
\begin{tabular}{@{}>{\raggedright\arraybackslash}p{3.3cm}>{\raggedright\arraybackslash}p{5.25cm}>{\raggedright\arraybackslash}p{2.8cm}>{\raggedright\arraybackslash}p{1.95cm}@{}}
\toprule
Result & Value & Scope & Artifact \\
\midrule
SageMaker wake from zero & 0 to 1 to 0 instances, no wake time recorded & \scope{BOUNDED-AWS-WINDOW} & acceptance record \\
\addlinespace[2pt]
Stream run load & 34,123 records in 900~s, 5~records/s with a 400~records/s burst from 60 to 135~s & \scope{BOUNDED-AWS-WINDOW} & load capture \\
Backpressure, source and key-by vertex & 0.927 in the burst window, 0.998 after it, below 0.1 at about 694~s & \scope{BOUNDED-AWS-WINDOW} & Flink capture \\
Lag, one-minute maxima & 139~s for Kinesis and for Flink, then 0 & \scope{BOUNDED-AWS-WINDOW} & CloudWatch export \\
Scale request to first HTTP 200, KEDA on EKS & 1.89~s & \scope{BOUNDED-AWS-WINDOW} & scale timeline \\
\addlinespace[2pt]
One-hour soak outcomes & 35,999 of 36,000 HTTP 200, one 503 & \scope{BOUNDED-AWS-WINDOW}, \scope{PUBLIC-REPO} & run summary, README \\
Soak latency, p95 and p99 & client 142.751 and 240.283~ms, server 7.048 and 9.348~ms & \scope{BOUNDED-AWS-WINDOW} & run summary \\
\addlinespace[2pt]
CDC acceptance on MSK Serverless & drills A to E passed, replay hash unchanged & \scope{BOUNDED-AWS-WINDOW} & acceptance record \\
Local CDC end-to-end tests & 5 passed in 36.41~s on a fresh stack & \scope{LOCAL}, \scope{PUBLIC-REPO} & drill record, README \\
Insert to read store & 0.57~s against a 5~s target & \scope{LOCAL} & drill record \\
Slot growth and drain & 74,035,000~bytes, then 0 & \scope{LOCAL} & drill record \\
Stale event after a newer one & guarded severity held at 0.95 & \scope{SYNTHETIC-DRILL} & drill record \\
\addlinespace[2pt]
Candidate with serving skew & holdout passed, shadow 0.6056 against 0.05 & \scope{SYNTHETIC-DRILL} & drill record \\
Canary with planted regression & reverted at weight 25 & \scope{SYNTHETIC-DRILL} & drill record \\
\addlinespace[2pt]
Scoring kernel latency & p95 0.6109~ms on one thread & \scope{LOCAL} & benchmark record \\
AI audit of SageMaker Terraform & 19 findings in four groups & \scope{LOCAL} & audit record \\
Public snapshot test suite & 821 passed, 12 skipped, 83.72\% coverage & \scope{PUBLIC-REPO} & README \\
Private regression suite & 1,633 passed, 21 skipped & \scope{LOCAL} & gate record \\
\bottomrule
\end{tabular}
\end{table}

\subsection{Stream run under load}\label{sec:eval-stream}

The stream run tested the pipeline under a burst in a bounded AWS window (\cref{fig:streamrun}). A producer sent 34,123 records into provisioned Kinesis over a 900~s run. The load held 5~records/s, except for a burst of 400~records/s from 60~s to 135~s. An observer polled the Flink backpressure ratio every 2~s and kept 270 samples. Each sample waited for the Flink dashboard to answer, so the samples arrived less often than the poll interval.

\input{figures/fig_stream_run}

Backpressure on the source and key-by vertex first rose above 0.1 at 79.5~s, inside the burst window. Inside that window, its maximum was 0.927 over 22 samples. The ratio kept rising after the burst ended, and it first reached 0.998 about 160~s after the load started. That value is the maximum for the whole run. The ratio cleared below 0.1 at about 694~s, and the final three samples read 0.0. The downstream vertex reported 0.0 throughout. That vertex chains the keyed operator with a print sink, and the keyed operator runs feature computation, the physics gate, a DynamoDB write, and a signed scorer call inline.

CloudWatch reports lag in one-minute buckets, so the lag figures here are one-minute bucket maxima. Kinesis iterator age peaked at 139,000~ms in a single bucket, and that bucket's average was 2,278.7~ms, so the peak was brief. Flink's \texttt{millisBehindLatest} also peaked at 139,000~ms, and it held that maximum in four consecutive buckets. Both series then drained to 0. In the same window, KEDA on EKS took 1.89~s from the scale request to the first HTTP 200, and \cref{app:k8s} explains why this measures a warm cluster.

The run measured the pipeline under load, and it did not measure detection quality. The evidence for this run does not record whether the inline scorer calls succeeded, so the paper makes no claim about them.

\subsection{One-hour soak through the ECS front door}\label{sec:eval-soak}

The soak tested serving capacity in a bounded AWS window. After a warmup of 60~s and 600 requests, one 3,600~s trial offered 10 requests per second with open-loop arrivals and no retries. Each request passed through the IAM-signed API Gateway route and the VPC Link to the ECS service. The trial returned 35,999 HTTP 200 responses and one HTTP 503, for a success ratio of 0.99997 against a 0.999 gate. Goodput counts a request only when it succeeds within 1,000~ms of its scheduled start, and 35,996 requests met that bound.

Over the 35,999 successes, client latency had a p50 of 101.418~ms, a p95 of 142.751~ms, and a p99 of 240.283~ms, with a maximum of 1,474.254~ms. Server latency had a p50 of 5.219~ms, a p95 of 7.048~ms, and a p99 of 9.348~ms, with a maximum of 45.451~ms. The client called the public API Gateway endpoint, so client latency includes the network path, and server latency is the time inside the service. The largest start lag of the client was 60.583~ms, below its 100~ms limit, so the load generator kept its schedule.

The soak repeated one fixed request whose history held one fix. The planner therefore ran the corridor, prism, speed, and validate steps, and it never ran the gap detector or the $\Pdata$ term. The soak also started prewarmed after a successful preflight, so it supports no cold-start claim. It measures serving capacity, and it measures no detection quality.

\subsection{Change data capture}\label{sec:eval-cdc}

A bounded AWS window ran the CDC path on an MSK Serverless cluster, with Debezium Connect and the worker on Fargate. Five drills passed. This evidence is private, and the public repository cannot reproduce it. Drill A flagged a controlled watchlist entry, and the scorer returned a score of 1 with human review required. The entry was then deleted. Drill B replayed the topic with a fresh consumer group. The replay read 3 tombstones, which the worker skips, and the LSN guard rejected 3 change events. The 3 DynamoDB items and their content hash did not change. Drill C delivered 4 writes that spanned a forced redeployment of Kafka Connect. Drill E paused Connect, and 48 writes built a slot lag of 27,408 to 27,464~bytes. The alarm fired after 3 one-minute datapoints on a temporary 16~KiB threshold, and the normal threshold of 209,715,200~bytes was then restored. The drill tried other thresholds only to calibrate the alarm. The drill D cleanup later soft-deleted the 52 records that drills C and E wrote. After the drills, Terraform reconciliation reported no changes, and the connector and its task were running.

The same window exposed a slot that grew while the connector still reported RUNNING. Before the repair, slot lag was 1,006,637,152~bytes. A Debezium heartbeat query now upserts one row, so the slot advances when the source tables are quiet. After the fix, slot lag fell to 4,328~bytes. The same path also runs on a local stack, and \cref{app:local-cdc} reports those runs.

Two drills isolate the failures that the design guards against. In a local slot drill on a throwaway PostgreSQL 16 container, a logical slot with no consumer grew at every one of five write rounds, from 6,926,136 to 74,035,000~bytes. The lag alert fired at a drill threshold of 65,536~bytes, and draining the slot returned the lag to 0. A synthetic duplicate drill ran the guard against an in-memory sink. When a stale event at LSN 2000 arrived after LSN 3000, the unguarded sink regressed the severity from 0.95 to 0.9, and the guarded sink kept 0.95. After a crash between the sink acknowledgement and the offset commit, the guard rejected all five redeliveries, and the state hash matched the baseline.

\subsection{Promotion gate drills}\label{sec:eval-gates}

Two synthetic drills run the real gate code with a synthetic toy scorer and synthetic AIS-shaped inputs, with no AWS. In the skew drill, a candidate had a bug in its online feature encoding. The holdout gate passed it with an AUC of 1.0000, a CRPS of 0.0563, and a calibration ratio of 0.9955. The drill derives its labels by thresholding the same scores, and it draws its observations from the predicted Gaussian. These holdout values therefore follow from the construction and do not measure a model. The shadow comparison then failed the candidate, with a mean absolute difference of 0.6056 and a maximum of 0.9363 against the 0.05 limit. The holdout set could not see a bug that lived only in the serving path, and the shadow step caught it.

In the canary drill, a candidate passed the holdout gate with an AUC of 1.0000, a CRPS of 0.5807, and a calibration ratio of 1.0567. This drill also builds its holdout set so that the candidate passes. The candidate then passed the shadow step with a mean absolute difference of 0.0088. The canary advanced to weight 5 and met a planted regression at weight 25. The rollback step then returned the full weight to the champion, and the final status was \texttt{rolled\_back}. The drill injects a burn check that reports a page at weight 25, so it tests the revert path and does not exercise \cref{eq:burn}.

\section{Lessons and Limitations}\label{sec:lessons}

\paragraph{What is not built.}
The paper claims none of the following as running.
\begin{itemize}[leftmargin=*,itemsep=1pt,topsep=2pt]
\item Candidate V3, a model candidate, is not deployed. The AWS endpoint served only a labeled demo stand-in image, so no trained model produced a served score.
\item No reinforcement-learning fine-tune ran, and the feedback loop ends at offline preference data.
\item Distributed tracing was never built, and the paper reports no drift monitoring on AWS.
\item CDC on AWS ran only in bounded acceptance windows. The canary is single-variant and traffic-inert on the live endpoint.
\item The stream carried synthetic load, and the ingestor replays only a synthetic fixture. The Flink job has no event-time windows or watermarks, and its gate scores the fast jumps in \cref{sec:scoring} late or drops them.
\item The lane release publisher is assumed integration, and the review console ran only locally.
\item The Workflow~5 job service, dispatch worker, and result collector are not built.
\end{itemize}

\paragraph{Lessons from the evidence.}
A resource that exists and raises no error is weak evidence that its mechanism works. The P25 and P26 alarms were present and quiet while scale-in and wake-up could never fire. The useful check is a live test of the behavior that the resource provides.

An HTTP 200 does not show that the intended detector ran. In one AWS streaming window, Flink kept one fix per vessel, so the planner never added the gap detector while every request still returned 200. The fix keeps five fixes, which clears the three-fix threshold of the gaps step. A code review found a related silent fallback. The AWS task definition never injected the database credential, so the review queue had fallen back to process memory.

A replication slot holds disk until a consumer drains it. The local slot drill grew a slot at every write round, and the AWS window found a slot at 1,006,637,152~bytes while the connector reported RUNNING. Slot lag therefore needs its own alarm, and quiet source tables need a heartbeat.

A holdout gate cannot see a bug that lives only in the serving path. The skew drill showed this case directly. A summary can also misstate its artifact, so this paper takes every number from the raw artifact and labels its scope. For example, 0.927 is the burst-window maximum, and the raw capture shows the run maximum of 0.998 after the burst.

\section{Conclusion}\label{sec:conclusion}

Harbormaster keeps each guarantee small and checkable, so that its alerts stay attributable, reviewable, and recoverable. Closed-form rules produce every score, and the paper states which jumps the stream gate scores late or drops. The CDC path gives a per-key replay guarantee with a short proof, and it names the effects that can repeat. The promotion ladder turns model trust into gates that produce evidence. In synthetic drills, the shadow step caught a skewed candidate that the holdout gate passed, and the canary reverted a planted regression. The measured results are small and scoped, and each one names where it ran. The next steps are a trained model on the endpoint, a canary that can move traffic, and CDC that runs outside attended windows.

\bibliographystyle{plainnat}
{\small \bibliography{refs}}

\appendix
\crefalias{section}{appendix}\crefalias{subsection}{appendix}

\section{Reproducibility and Disclosure Boundary}\label{app:repro}

\paragraph{Code and presentation materials.}
The public code is available at \url{https://github.com/arunshar/harbormaster-maritime-ai}. The narrated demonstration is available at \url{https://www.youtube.com/watch?v=NkcvSzdzq8E}. The presentation slides are available at \url{https://claude.ai/artifact/8GTHAQe9Uuyu6bfMkyorD3}. The video and slides explain the system, and the reproducible components are specified below.

\paragraph{What the public repository reproduces.}
The public repository is \url{https://github.com/arunshar/harbormaster-maritime-ai}, and it carries the MIT License. It holds the code, the tests, the Terraform modules, the synthetic replay fixture, the load-test scripts, and the scripts for the skew, canary, and slot drills. A reader can run its unit and integration suite without cloud credentials. At public commit fcac977, that suite reported 821 passed and 12 skipped tests, and later commits to the repository may change the count. A reader with Docker, kind, and kubectl can start the local CDC stack and run its five end-to-end checks. The coverage figure of 83.72\% comes from a local run, because CI does not run coverage. The kernel benchmark script is public, and its measured numbers are not.

\paragraph{What stays private.}
The AWS evidence stays private. It includes the stream-run captures, the soak summary, the CDC acceptance record, the SageMaker acceptance record, and the teardown plan. These files contain account and resource identifiers, so the paper cites them by kind only. The public README states the soak outcome and the client latency. It also states that the repository claims no end-to-end CDC run on AWS. The AI audit record, the private regression suite, the drill transcripts, the duplicate-drill script, and the local CDC run records are private as well.

\paragraph{What the paper never claims.}
\begin{itemize}[leftmargin=*,itemsep=1pt,topsep=2pt]
\item The paper claims no detection accuracy, no false-alert rate, and no result on real labeled cases.
\item It never claims that a trained model served traffic on AWS.
\item It claims no input other than synthetic load and the synthetic replay fixture.
\item It never claims that each change takes effect once across every destination.
\item It claims no SLO, no cold-start time, and no SageMaker wake time.
\item It never claims that CDC ran on AWS outside bounded windows, and it reports no drift monitoring on AWS.
\end{itemize}

\paragraph{Build record.}
One git author made the 226 commits behind this paper. GitHub lists 49 pull requests for the private repository, and 48 were merged. Coding agents served as tooling. Git history holds 44 merge commits, and 38 of them carry the agent's branch prefix. Agents ran no Terraform apply and no push. The author ran every apply, push, login, and live window, and this record is self-reported.

\section{The Full System Board}\label{app:overview}

\Cref{fig:overview} places all five workflows on one page, together with the stores that they share. The board shows the integrated target design, and it does not show a deployed system.

\input{figures/fig_overview}

\section{Method Details}\label{app:method}

\Cref{alg:planner} gives the rule that the scoring path in \cref{sec:scoring} follows for one report.

\begin{algorithm}[ht]
\caption{The deterministic scoring path handles one report.}
\label{alg:planner}
\begin{algorithmic}[1]
\Require current fix $x$, history $h$ of prior fixes
\State Sort $(h,x)$ by time, drop fixes whose timestamp repeats, and let $n$ be the number of remaining fixes minus one
\If{any segment of the sorted track needs a speed above $12\,\vmax$}
  \State \Return HTTP 422 with no score
\EndIf
\State $\mathcal{P}\gets\{\text{corridor}\}$
\If{$n\ge 1$}
  \State $\mathcal{P}\gets\mathcal{P}\cup\{\text{prism},\ \text{speed},\ \text{validate}\}$
\EndIf
\If{$n\ge 3$}
  \State $\mathcal{P}\gets\mathcal{P}\cup\{\text{gaps}\}$
\EndIf
\State Run every detector in $\mathcal{P}$ and the watchlist lookup, and collect the severities $s_i$
\State Compute $S$ and $c$ with \cref{eq:fusion}
\State Queue a review case when \cref{eq:review} holds
\State \Return $S$, $c$, the reasons, and a trace identifier
\end{algorithmic}
\end{algorithm}

\subsection{Proof of the replay proposition}\label{app:proof}

\begin{proof}
The proof uses induction on the length of $\mathcal{D}$ with one invariant. After any prefix of $\mathcal{D}$, each key seen so far stores the largest LSN $m(k)$ among its delivered events, together with that event's value, and every unseen key is empty. The invariant holds for the empty prefix. Suppose the next delivery is $(k,\ell,v)$. Assumption (iv) ensures that the delivery reaches the guard for key $k$. If $k$ is empty, the condition holds, the write succeeds, and $m(k)=\ell$. If $\ell^\star(k)<\ell$, the write succeeds, and $\ell$ becomes the new maximum. If $\ell^\star(k)\ge\ell$, the condition fails and the item keeps its state. The maximum also stays the same, because $\ell\le m(k)$. When $\ell=m(k)$, assumption (i) says that the stored value already equals $v$. Assumption (ii) ensures that no other write changes $\ell^\star(k)$ between the check and the put. Assumption (iii) ensures that the stored value comes from a single event and never mixes fields from two events. Each write touches one item, so every other key keeps the invariant. The final state therefore depends only on the set of delivered events. A replayed prefix or suffix adds no event outside that set, so the state stays the same.
\end{proof}

\subsection{The input drift check}\label{app:drift}

The input check in \cref{sec:promotion} compares a reference sample with a current sample for each numeric feature. It bins both samples into at most 10 equal-frequency bins whose edges are the distinct reference deciles. The outer edges are open, and each fraction has a floor of $10^{-6}$.
\begin{align}
\mathrm{PSI} &= \sum_{b=1}^{B}\big(q^{\mathrm{cur}}_b-q^{\mathrm{ref}}_b\big)\ln\frac{q^{\mathrm{cur}}_b}{q^{\mathrm{ref}}_b}, \quad B\le 10, \qquad D_{\mathrm{KS}}=\sup_x\big|F_{\mathrm{ref}}(x)-F_{\mathrm{cur}}(x)\big|, \label{eq:psi}\\
p &= Q_{\mathrm{KS}}\!\left(\Big(\sqrt{n_e}+0.12+\frac{0.11}{\sqrt{n_e}}\Big)D_{\mathrm{KS}}\right), \qquad n_e=\frac{nm}{n+m}. \label{eq:ks}
\end{align}
Here $q^{\mathrm{ref}}_b$ and $q^{\mathrm{cur}}_b$ are the reference and current fractions in bin $b$, and $F$ is an empirical distribution function. The function $Q_{\mathrm{KS}}$ is the survival function of the Kolmogorov distribution, and $n$ and $m$ are the two sample sizes. This p-value uses a standard asymptotic approximation to the Kolmogorov distribution with the effective sample size $n_e$. A feature drifts when $\mathrm{PSI}\ge 0.25$, or when $\mathrm{PSI}\ge 0.1$ and $p<0.05$.

\section{Additional Measurements}\label{app:measure}

\subsection{Local change data capture runs}\label{app:local-cdc}

On the local stack, PostgreSQL 16 with logical decoding feeds Strimzi-managed Kafka on kind. In the first local run, a new watchlist row reached DynamoDB Local 0.57~s after the PostgreSQL insert, against a 5~s target, and all five end-to-end tests passed in 33.50~s. A second run on a fresh stack followed a change to the registry key schema. Its smoke test took 4.28~s, and the five tests passed in 36.41~s. Two attempts before this fresh-stack run recorded one slow result of 5.06~s and one warm-topic pass of 0.55~s.

\subsection{Kernel latency, audit, and test suites}\label{app:misc}

A local benchmark timed the single-event scoring kernel on one thread over 1,000 iterations after 200 warmup iterations. The request carried a 121-fix history, the watchlist was off, and no endpoint supplied $\Pdata$. The kernel had a p95 of 0.6109~ms and a p99 of 0.7483~ms, and its throughput was 1,760.2~scores/s. The public benchmark document describes the method, and it prints no measured numbers.

The AI audit session in \cref{sec:scale} returned 19 findings in four groups. Two findings were marked as required fixes, seven were hardening items, six became decision records, and four needed no change. The second required fix was a SageMaker Model Package Group that nothing created, and a commit provisioned it.

The public snapshot's test command reported 821 passed and 12 skipped tests, with no failures. The 12 skipped tests need a live PostgreSQL database. Local line and branch coverage over the five packages was 83.72\%, above an 80\% floor, and CI does not run coverage. These are the only test figures that a reader can reproduce. The private regression gate passed 1,633 tests and skipped 21. A separate isolated suite of 57 first-party Lambda tests also passed, and the paper never adds the two counts.

\subsection{The Kubernetes scaling test}\label{app:k8s}

In the design, a KEDA ScaledObject can also scale a serving deployment on EKS from a Kinesis trigger, with a minimum of zero replicas. This Kubernetes path was proven once in a bounded AWS window and then torn down. KEDA on EKS moved the deployment from zero replicas to one ready replica, and the time from the scale request to the first HTTP 200 was 1.89~s. That figure measures a warm cluster that scaled one pod, so it is not an SLO and not a cold-start result. The deployment later returned to zero replicas. A scheduled teardown guard then deleted the EKS node group and the cluster, and the window rolled back to ECS. The final Terraform plan showed 0 resources to add, 0 to change, and 0 to destroy. A decision record kept ECS as the front door, because ECS was already the rollback path and needs no cluster operations.

\end{document}

%% file: figures/fig_architecture.tex
\begin{figure}[t]
\centering
\definecolor{hmaccent}{RGB}{31,90,153}
\definecolor{hmwarm}{RGB}{186,84,20}
\begin{tikzpicture}[
  font=\scriptsize,
  comp/.style={draw=black!75, fill=white, rounded corners=1.5pt, align=center,
    text width=2.1cm, minimum height=1.0cm, inner sep=2pt, line width=0.5pt},
  design/.style={comp, dashed, draw=black!60, text=black!80},
  input/.style={comp, fill=black!8},
  flow/.style={-{Stealth[length=4pt,width=3.5pt]}, line width=0.55pt, draw=black!80},
  designflow/.style={flow, dashed, draw=black!60},
  group/.style={draw=black!35, rounded corners=3pt, fill=black!3, inner xsep=4pt, inner ysep=3pt},
  gtitle/.style={font=\scriptsize\bfseries, text=black!75, anchor=north west, inner sep=2pt},
]
\def\xA{1.42}\def\xB{4.12}\def\xC{6.92}\def\xD{9.72}\def\xE{12.42}
\def\yA{0}\def\yB{-1.7}\def\yC{-3.4}\def\yD{-5.4}

\node[input] (fix) at (\xA,\yA) {Synthetic replay\\fixture\\(2,709 events)};
\node[comp]  (kin) at (\xB,\yA) {Kinesis Data\\Streams\\(provisioned)};
\node[comp, text width=2.3cm]  (fl)  at (\xC,\yA) {Managed Flink 1.20\\keyed by MMSI\\physics gate ${\ge}\,0.3$};
\node[comp]  (gw)  at (\xD,\yA) {API Gateway\\(IAM-signed)\\and VPC Link};
\node[comp]  (ecs) at (\xE,\yA) {ECS Fargate\\scorer with numpy\\data-term surrogate};

\node[comp]   (feat) at (\xC,\yB) {DynamoDB\\feature items};
\node[comp]   (ddb) at (\xD,\yB) {DynamoDB\\registry read store};
\node[design] (con) at (\xE,\yB) {Review console\\(Streamlit,\\local only)};

\node[comp] (pg)   at (\xA,\yC) {RDS PostgreSQL 16\\logical\\replication};
\node[comp] (dbz)  at (\xB,\yC) {Debezium on\\Kafka Connect};
\node[comp] (kfk)  at (\xC,\yC) {Kafka\\(MSK Serverless)};
\node[comp] (sink) at (\xD,\yC) {Registry worker\\with LSN guard};
\node[comp] (ice)  at (\xE,\yC) {Iceberg tables\\(lane backfill\\and CDC audit)};

\node[comp, text width=2.3cm] (sm) at (9.8,\yD) {SageMaker async\\endpoint (stand-in)\\scaled 0\,$\to$\,1\,$\to$\,0};
\node[comp] (emr) at (\xE,\yD) {EMR Serverless\\backfill job\\(15~min auto-stop)};

\node[design, text width=3.3cm] (train) at (2.75,\yD) {Model training on a\\research GPU cluster\\(design only)};

\begin{scope}[on background layer]
  \coordinate (awsTL) at (0,1.5);
  \coordinate (awsTR) at (13.85,1.5);
  \coordinate (awsBR) at (13.85,-6.2);
  \coordinate (awsBM) at (8.2,-6.2);
  \coordinate (awsMM) at (8.2,-4.2);
  \coordinate (awsBL) at (0,-4.2);
  \draw[draw=hmaccent!70, line width=0.7pt, rounded corners=4pt, fill=hmaccent!4]
    (awsTL) -- (awsTR) -- (awsBR) -- (awsBM) -- (awsMM) -- (awsBL) -- cycle;
  \node[group, fit={(fix)(kin)(fl)($(fl.north)+(0,0.36)$)}] (gStream) {};
  \node[group, fit={(gw)(ecs)($(ecs.north)+(0,0.36)$)}] (gScore) {};
  \node[group, fit={(pg)(dbz)(kfk)(sink)($(sink.north)+(0,0.36)$)}] (gCdc) {};
  \node[group, fit={(ice)(emr)($(ice.north)+(0,0.36)$)}] (gLake) {};
  \draw[draw=black!55, dashed, line width=0.7pt, rounded corners=4pt, fill=black!2]
    (0,-4.45) rectangle (7.9,-6.2);
\end{scope}
\node[font=\footnotesize\bfseries, text=hmaccent, anchor=north west, inner sep=2pt] at (0.05,1.47) {AWS serving plane};
\node[font=\footnotesize\bfseries, text=black!70, anchor=north west, inner sep=2pt] at (0.05,-4.48) {Off-cloud training plane};
\node[gtitle] at (gStream.north west) {Stream path};
\node[gtitle] at (gScore.north west) {Scoring front door};
\node[gtitle] at (gCdc.north west) {Change data capture};
\node[gtitle] at (gLake.north west) {Lake};

\draw[flow] (fix) -- (kin);
\draw[flow] (kin) -- (fl);
\draw[flow] (fl) -- (gw);
\draw[flow] (gw) -- (ecs);
\draw[flow] (fl) -- node[right, inner sep=2pt] {features} (feat);
\draw[designflow] (ecs) -- (con);
\draw[flow] (pg) -- (dbz);
\draw[flow] (dbz) -- (kfk);
\draw[flow] (kfk) -- (sink);
\draw[flow] (sink) -- (ddb);
\draw[designflow] (sink) -- (ice);
\draw[flow] (emr) -- (ice);
\draw[designflow] (train.east) -- node[above, inner sep=2pt] {promotion boundary}
  node[below, inner sep=2pt] {no checkpoint promoted} (sm.west);

\begin{scope}[shift={(0.35,-0.9)}]
  \draw[black!75, fill=white, rounded corners=1pt, line width=0.5pt] (0,-0.1) rectangle (0.42,0.1);
  \node[anchor=west, inner sep=1pt] at (0.52,0) {deployed in a bounded AWS window};
  \draw[black!60, dashed, fill=white, rounded corners=1pt, line width=0.5pt] (0,-0.38) rectangle (0.42,-0.18);
  \node[anchor=west, inner sep=1pt] at (0.52,-0.28) {not run on AWS};
  \draw[black!75, fill=black!8, rounded corners=1pt, line width=0.5pt] (0,-0.66) rectangle (0.42,-0.46);
  \node[anchor=west, inner sep=1pt] at (0.52,-0.56) {synthetic input data};
  \draw[flow] (0,-0.84) -- (0.42,-0.84);
  \node[anchor=west, inner sep=1pt] at (0.52,-0.84) {flow that ran in a bounded AWS window};
  \draw[designflow] (0,-1.12) -- (0.42,-1.12);
  \node[anchor=west, inner sep=1pt] at (0.52,-1.12) {flow with no AWS evidence};
\end{scope}
\end{tikzpicture}
\caption{Harbormaster separates an off-cloud training plane from an AWS serving plane. Training on a research GPU cluster is design only, and no trained checkpoint crossed the promotion boundary. The Streamlit review console runs locally. Solid boxes are components that were deployed in a bounded AWS window, and solid arrows are flows that ran in a bounded AWS window. Dashed boxes did not run on AWS, and dashed arrows mark flows with no AWS evidence, such as the append to the CDC audit table. The Terraform at the reviewed commit runs the scorer on Fargate Spot capacity. The SageMaker asynchronous endpoint served a labeled stand-in image and scaled from zero to one instance and back. The ECS task sets no endpoint, so the scorer took its data term $P_{\mathrm{data}}$ from a numpy surrogate on AWS. The figure therefore draws no arrow between the scorer and the endpoint.}
\label{fig:architecture}
\end{figure}

%% file: figures/fig_prism.tex
\begin{figure}[t]
\centering
\definecolor{hmaccent}{RGB}{31,90,153}
\definecolor{hmwarm}{RGB}{186,84,20}
\begin{subfigure}[b]{0.49\linewidth}
\centering
\begin{tikzpicture}[
  font=\footnotesize,
  dim/.style={{Stealth[length=3pt,width=3pt]}-{Stealth[length=3pt,width=3pt]}, line width=0.45pt, draw=black!80},
  ext/.style={black!45, line width=0.3pt},
  fixpt/.style={circle, fill=black, inner sep=0pt, minimum size=3.6pt},
]
\def\aa{1.8}\def\bb{1.3416}\def\cc{1.2}
\fill[hmaccent!10] (0,0) ellipse [x radius=\aa, y radius=\bb];
\draw[hmaccent, line width=0.8pt] (0,0) ellipse [x radius=\aa, y radius=\bb];
\coordinate (P) at ({\aa*cos(118)},{\bb*sin(118)});
\draw[black!55, line width=0.45pt, densely dashed] (-\cc,0) -- (P) -- (\cc,0);
\fill[black!70] (P) circle (1.3pt);
\node[anchor=south east, inner sep=1.5pt] at (P) {$P$};
\draw[hmaccent, line width=0.6pt] (0,0) -- (0,\bb);
\node[anchor=west, inner sep=1.5pt] at (0,0.85*\bb) {$b$};
\draw[hmaccent, line width=0.6pt] (0,0) -- (-\aa,0);
\node[anchor=south, inner sep=1.5pt] at (-0.55,0) {$a$};
\node[fixpt, label={[inner sep=1pt]below:$A$}] at (-\cc,0) {};
\node[fixpt, label={[inner sep=1pt]below:$B$}] at (\cc,0) {};
\draw[ext] (-\aa,0.1) -- (-\aa,1.75);
\draw[ext] (\aa,0.1) -- (\aa,1.75);
\draw[dim] (-\aa,1.6) -- node[above, inner sep=2pt] {$L = v_{\max}\,(t_B - t_A)$} (\aa,1.6);
\draw[ext] (-\cc,-0.45) -- (-\cc,-1.75);
\draw[ext] (\cc,-0.45) -- (\cc,-1.75);
\draw[dim] (-\cc,-1.6) -- node[below, inner sep=2pt] {$d_{AB}$} (\cc,-1.6);
\node[anchor=west, align=left, inner sep=0pt] at (2.05,0.35) {$a = L/2$};
\node[anchor=west, align=left, inner sep=0pt] at (2.05,-0.35) {$b = \tfrac{1}{2}\sqrt{L^2 - d_{AB}^2}$};
\end{tikzpicture}
\caption{The feasible case has $d_{AB} \le L$.}
\label{fig:prism-feasible}
\end{subfigure}\hfill
\begin{subfigure}[b]{0.49\linewidth}
\centering
\begin{tikzpicture}[
  font=\footnotesize,
  dim/.style={{Stealth[length=3pt,width=3pt]}-{Stealth[length=3pt,width=3pt]}, line width=0.45pt, draw=black!80},
  ext/.style={black!45, line width=0.3pt},
  fixpt/.style={circle, fill=black, inner sep=0pt, minimum size=3.6pt},
]
\def\hd{1.8}\def\rr{1.2}
\draw[hmwarm, line width=0.7pt, dashed, fill=hmwarm!7] (-\hd,0) circle (\rr);
\draw[hmwarm, line width=0.7pt, dashed, fill=hmwarm!7] (\hd,0) circle (\rr);
\node[fixpt, label={[inner sep=1pt]below:$A$}] at (-\hd,0) {};
\node[fixpt, label={[inner sep=1pt]below:$B$}] at (\hd,0) {};
\draw[-{Stealth[length=3pt,width=3pt]}, line width=0.45pt, draw=black!80] (-\hd,0) -- node[above, inner sep=1.5pt] {$L/2$} (-\hd+\rr,0);
\draw[-{Stealth[length=3pt,width=3pt]}, line width=0.45pt, draw=black!80] (\hd,0) -- node[above, inner sep=1.5pt] {$L/2$} (\hd-\rr,0);
\node[font=\footnotesize, align=center, text=black!80] at (0,0.95) {no path\\fits $L$};
\draw[ext] (-\hd,-0.45) -- (-\hd,-1.75);
\draw[ext] (\hd,-0.45) -- (\hd,-1.75);
\draw[dim] (-\hd,-1.6) -- node[below, inner sep=2pt] {$d_{AB} > L$} (\hd,-1.6);
\path (0,2.05) -- (0,-2.2);
\end{tikzpicture}
\caption{The infeasible case has $d_{AB} > L$.}
\label{fig:prism-infeasible}
\end{subfigure}
\caption{The space-time prism tests whether one vessel could have moved between two fixes at the speed cap. Fixes $A$ and $B$ lie a distance $d_{AB}$ apart, and the reach length is $L = v_{\max}(t_B - t_A)$, where the code sets $v_{\max}$ to 25~knots. (a) When $d_{AB} \le L$, the feasible positions fill an ellipse with foci $A$ and $B$, semi-major axis $a = L/2$, and semi-minor axis $b = \frac{1}{2}\sqrt{L^2 - d_{AB}^2}$. A point $P$ on the boundary satisfies $|AP| + |PB| = L$. (b) When $d_{AB} > L$, the reach disks of radius $L/2$ around $A$ and $B$ do not meet at the midpoint time, so no path fits within $L$.}
\label{fig:prism}
\end{figure}

%% file: figures/fig_cdc.tex
\begin{figure}[t]
\centering
\definecolor{hmaccent}{RGB}{31,90,153}
\definecolor{hmwarm}{RGB}{186,84,20}
\begin{tikzpicture}[
  font=\scriptsize,
  comp/.style={draw=black!75, fill=white, rounded corners=1.5pt, align=center,
    minimum height=1.0cm, inner sep=2pt, line width=0.5pt},
  flow/.style={-{Stealth[length=4pt,width=3.5pt]}, line width=0.55pt, draw=black!80},
  replay/.style={-{Stealth[length=4pt,width=3.5pt]}, line width=0.65pt, draw=hmwarm, dashed},
  group/.style={draw=black!35, rounded corners=3pt, fill=black!3, inner xsep=4pt, inner ysep=3pt},
  gtitle/.style={font=\scriptsize\bfseries, text=black!75, anchor=north west, inner sep=2pt},
]
\node[comp, text width=1.55cm] (wal)  at (1.0,0)  {Write-ahead\\log (WAL)};
\node[comp, text width=1.55cm] (slot) at (3.05,0) {Logical\\replication\\slot};
\node[comp, text width=1.5cm]  (dbz)  at (5.3,0)  {Debezium\\connector};
\node[comp, text width=2.85cm] (kfk)  at (8.05,0) {Kafka topic\\MSK Serverless (AWS)\\Strimzi on kind (local)};
\node[comp, text width=2.15cm] (sink) at (11.2,0) {Registry worker\\(idempotent sink)};

\node[comp, text width=2.85cm, minimum height=1.25cm] (ice) at (7.3,-2.55)
  {Iceberg audit table\\append-only, so a replayed\\event appends again};
\node[comp, text width=3.45cm, minimum height=1.25cm] (ddb) at (11.45,-2.55)
  {DynamoDB item, conditional write\\[1pt]
   \ttfamily attribute\_not\_exists(\\
   \ttfamily \ \ last\_applied\_lsn)\\
   \ttfamily OR last\_applied\_lsn < :lsn};

\begin{scope}[on background layer]
  \node[group, fit={(wal)(slot)($(slot.north)+(0,0.36)$)}] (gPg) {};
\end{scope}
\node[gtitle] at (gPg.north west) {PostgreSQL 16, logical decoding};

\draw[flow] (wal) -- (slot);
\draw[flow] (slot) -- (dbz);
\draw[flow] (dbz) -- (kfk);
\draw[flow] (kfk) -- (sink);
\draw[flow] ($(sink.south west)+(0.2,0)$) -- ($(ice.north east)+(-0.3,0)$);
\coordinate (s1) at ($(sink.south)+(-0.4,0)$);
\coordinate (d1) at (s1 |- ddb.north);
\draw[flow] (s1) -- (d1);

\draw[replay] ($(kfk.north east)+(-0.35,0)$) to[out=40, in=140]
  node[above, inner sep=2pt, text=hmwarm] {replay by a fresh consumer group} ($(sink.north west)+(0.35,0)$);
\coordinate (s2) at ($(sink.south)+(0.55,0)$);
\coordinate (d2) at (s2 |- ddb.north);
\draw[replay] (s2) -- ($(d2)+(0,0.12)$);
\node[text=hmwarm, font=\footnotesize\bfseries, inner sep=0pt] at ($(d2)+(0,0.3)$) {$\times$};
\node[anchor=north west, align=left, text=hmwarm, inner sep=1pt] at ($(s2)+(0.17,-0.05)$) {stale LSN rejected,\\item unchanged};

\begin{scope}[shift={(0.15,-2.0)}]
  \draw[flow] (0,0) -- (0.6,0);
  \node[anchor=west, inner sep=1pt] at (0.7,0) {first delivery};
  \draw[replay] (0,-0.4) -- (0.6,-0.4);
  \node[anchor=west, inner sep=1pt] at (0.7,-0.4) {redelivery of the same events};
  \node[text=hmwarm, font=\footnotesize\bfseries, inner sep=0pt] at (0.3,-0.8) {$\times$};
  \node[anchor=west, inner sep=1pt] at (0.7,-0.8) {conditional write fails};
\end{scope}
\end{tikzpicture}
\caption{The change data capture (CDC) path copies PostgreSQL 16 changes into DynamoDB. Debezium reads the write-ahead log through a logical replication slot and publishes change events to Kafka. Kafka ran as Amazon MSK Serverless in the bounded AWS window and as Strimzi on kind locally. The registry worker writes each item with the conditional write shown, so the write succeeds only when the item has no applied LSN or holds an older one. A replayed event with a stale LSN fails that condition and leaves the item unchanged. The Iceberg audit table appends every delivered event, so its rows can repeat after a redelivery. No AWS record shows the audit append, so \cref{fig:architecture} draws that edge dashed.}
\label{fig:cdc}
\end{figure}

%% file: figures/fig_promotion.tex
\begin{figure}[t]
\centering
\definecolor{hmaccent}{RGB}{31,90,153}
\definecolor{hmwarm}{RGB}{186,84,20}
\begin{tikzpicture}[
  font=\scriptsize,
  st/.style={draw=black!75, fill=white, rounded corners=5pt, align=center,
    minimum height=0.9cm, inner sep=2pt, line width=0.5pt},
  cw/.style={draw=black!75, fill=white, circle, minimum size=0.66cm, inner sep=0pt, line width=0.5pt},
  term/.style={draw=black!75, fill=black!8, rounded corners=5pt, align=center,
    minimum height=0.6cm, inner sep=3pt, line width=0.9pt},
  flow/.style={-{Stealth[length=4pt,width=3.5pt]}, line width=0.55pt, draw=black!80},
  rej/.style={-{Stealth[length=4pt,width=3.5pt]}, line width=0.55pt, draw=hmwarm},
  lab/.style={inner sep=1.5pt, font=\scriptsize},
]
\node[st, text width=1.2cm]  (cand) at (0.72,0)  {Candidate\\model};
\node[st, text width=1.15cm] (hold) at (2.7,0) {Holdout\\gate};
\node[st, text width=1.2cm]  (prb)  at (4.65,0)  {Reward-\\hacking\\probe};
\node[st, text width=1.45cm] (shd)  at (6.75,0) {Shadow\\comparison};
\node[cw] (c5)   at (8.65,0)  {5};
\node[cw] (c25)  at (9.57,0) {25};
\node[cw] (c50)  at (10.49,0) {50};
\node[cw] (c100) at (11.41,0) {100};
\node[st, text width=1.3cm, line width=0.9pt, draw=hmaccent, fill=hmaccent!8] (prom) at (13.0,0) {Promoted};

\begin{scope}[on background layer]
  \node[draw=black!35, rounded corners=3pt, fill=black!3, inner xsep=4pt, inner ysep=3pt,
    fit={(c5)(c100)($(c5.north)+(0,0.34)$)($(c5.south)+(0,-0.05)$)}] (gCan) {};
\end{scope}
\node[font=\scriptsize\bfseries, text=black!75, anchor=north west, inner sep=2pt] at (gCan.north west) {canary weight};

\node[term, minimum width=4.45cm] (rej) at (4.72,1.75) {Rejected, never reaches the canary};
\node[term, minimum width=3.6cm]  (rb)  at (10.03,-1.75) {Rolled back, full weight to champion};

\draw[flow] (cand) -- (hold);
\draw[flow] (hold) -- (prb);
\draw[flow] (prb) -- (shd);
\draw[flow] (shd) -- (c5);
\draw[flow] (c5) -- (c25);
\draw[flow] (c25) -- (c50);
\draw[flow] (c50) -- (c100);
\draw[flow] (c100) -- (prom);
\draw[flow] (hold.south) to[out=-50, in=-130] node[lab, below] {no probe result supplied} (shd.south);

\draw[rej] (hold.north) -- node[lab, right, text=hmwarm] {fail} (hold.north |- rej.south);
\draw[rej] (prb.north) -- node[lab, right, text=hmwarm] {blocked} (prb.north |- rej.south);
\draw[rej] (shd.north) -- node[lab, right, text=hmwarm] {fail} (shd.north |- rej.south);
\foreach \n in {c5,c25,c50,c100} {
  \draw[rej] (\n.south) -- (\n.south |- rb.north);
}
\node[lab, anchor=east, text=hmwarm] at ($(c5.south |- rb.north)!0.5!(c5.south)+(-0.08,0)$) {burn};
\end{tikzpicture}
\caption{The promotion ladder moves a candidate model through up to four checks. Black edges move the candidate forward, and orange edges end the promotion. The candidate must pass the holdout gate, then the reward-hacking probe when a probe result is supplied, and then the shadow comparison. Every gate has a reject edge, and a rejected candidate never reaches the canary. The canary advances through weights 5, 25, 50, and 100 while the burn rate stays healthy, and a burn at any step returns the full weight to the champion. The gate results reported in this paper come from synthetic drills. The live asynchronous endpoint has one variant, so the canary moved no traffic on AWS.}
\label{fig:promotion}
\end{figure}

%% file: figures/fig_stream_run.tex
\usepgfplotslibrary{groupplots}
\begin{figure}[t]
\centering
\definecolor{hmaccent}{RGB}{31,90,153}
\definecolor{hmwarm}{RGB}{186,84,20}
\pgfplotstableread[col sep=comma]{figures/data/scale_events.csv}\hmScaleEvents
\pgfplotstablegetelem{0}{t_s}\of\hmScaleEvents\let\hmTScale\pgfplotsretval
\pgfplotstablegetelem{3}{t_s}\of\hmScaleEvents\let\hmTZero\pgfplotsretval
\begin{tikzpicture}[
  burst/.style={fill=black!10, draw=none},
  mScale/.style={hmwarm, line width=0.6pt, dashed},
  mZero/.style={hmwarm, line width=0.6pt, dash dot},
]
\begin{groupplot}[
  group style={group size=1 by 3, vertical sep=5pt,
    xlabels at=edge bottom, xticklabels at=edge bottom},
  scale only axis, width=12.0cm, height=1.8cm,
  xmin=0, xmax=950, xtick={0,100,...,900},
  tick label style={font=\scriptsize}, label style={font=\scriptsize},
  ylabel style={align=center},
  tick align=outside, tick pos=left,
  axis line style={black!70, line width=0.4pt},
  every tick/.style={black!70},
  clip=true,
]
\nextgroupplot[ymin=0, ymax=450, ytick={0,200,400}, ylabel={offered load\\(records/s)},
  legend columns=3,
  legend style={at={(0.5,1.06)}, anchor=south, font=\scriptsize, draw=black!40,
    /tikz/every even column/.append style={column sep=8pt}},
  legend cell align=left]
\addlegendimage{area legend, burst}\addlegendentry{burst window, 60--135 s}
\addlegendimage{black, line width=0.7pt, mark=o, mark size=1.3pt, mark options={fill=white}}\addlegendentry{Kinesis iterator age}
\addlegendimage{hmaccent, line width=0.7pt, mark=square*, mark size=1.1pt}\addlegendentry{Flink \texttt{millisBehindLatest}}
\addlegendimage{mScale}\addlegendentry{KEDA scale request}
\addlegendimage{mZero}\addlegendentry{back to zero replicas}
\fill[burst] ({axis cs:60,0}|-{rel axis cs:0,0}) rectangle ({axis cs:135,0}|-{rel axis cs:0,1});
\addplot[black, line width=0.6pt, const plot, mark=none, forget plot]
  table[col sep=comma, x=t_s, y=records_per_s] {figures/data/stream_rate.csv};
\draw[mScale] ({axis cs:\hmTScale,0}|-{rel axis cs:0,0}) -- ({axis cs:\hmTScale,0}|-{rel axis cs:0,1});
\draw[mZero]  ({axis cs:\hmTZero,0}|-{rel axis cs:0,0}) -- ({axis cs:\hmTZero,0}|-{rel axis cs:0,1});
\node[anchor=north west, font=\scriptsize, inner sep=2pt] at (rel axis cs:0,1) {(a)};
\node[anchor=west, font=\scriptsize, inner sep=2pt] at (axis cs:140,360) {400 records/s};
\node[anchor=south, font=\scriptsize, inner sep=2pt] at (axis cs:620,5) {5 records/s};
\nextgroupplot[ymin=0, ymax=1.08, ytick={0,0.5,1}, yticklabels={0,0.5,1},
  ylabel={backpressure\\(source, key-by)}]
\fill[burst] ({axis cs:60,0}|-{rel axis cs:0,0}) rectangle ({axis cs:135,0}|-{rel axis cs:0,1});
\draw[black!60, dashed, line width=0.5pt] ({rel axis cs:0,0}|-{axis cs:0,0.1}) -- ({rel axis cs:1,0}|-{axis cs:0,0.1});
\addplot[hmaccent, line width=0.6pt, mark=none, forget plot]
  table[col sep=comma, x=t_s, y=backpressure] {figures/data/backpressure.csv};
\draw[mScale] ({axis cs:\hmTScale,0}|-{rel axis cs:0,0}) -- ({axis cs:\hmTScale,0}|-{rel axis cs:0,1});
\draw[mZero]  ({axis cs:\hmTZero,0}|-{rel axis cs:0,0}) -- ({axis cs:\hmTZero,0}|-{rel axis cs:0,1});
\node[anchor=north west, font=\scriptsize, inner sep=2pt] at (rel axis cs:0,1) {(b)};
\node[anchor=south east, font=\scriptsize, inner sep=1.5pt, text=black!70] at (axis cs:905,0.1) {0.1};
\nextgroupplot[ymin=-12, ymax=175, ytick={0,50,100,150}, ylabel={lag (s)\\minute max},
  xlabel={time since load start (s)}]
\fill[burst] ({axis cs:60,0}|-{rel axis cs:0,0}) rectangle ({axis cs:135,0}|-{rel axis cs:0,1});
\addplot[hmaccent, line width=0.7pt, const plot, mark=square*, mark size=1.1pt, forget plot]
  table[col sep=comma, x=t_s, y=flink_millis_behind_s] {figures/data/lag.csv};
\addplot[black, line width=0.7pt, const plot, mark=o, mark size=1.3pt, mark options={fill=white}, forget plot]
  table[col sep=comma, x=t_s, y=kinesis_iterator_age_s] {figures/data/lag.csv};
\addplot[black, line width=0.7pt, mark=none, forget plot] coordinates {(835.482822,0) (895.482822,0)};
\draw[mScale] ({axis cs:\hmTScale,0}|-{rel axis cs:0,0}) -- ({axis cs:\hmTScale,0}|-{rel axis cs:0,1});
\draw[mZero]  ({axis cs:\hmTZero,0}|-{rel axis cs:0,0}) -- ({axis cs:\hmTZero,0}|-{rel axis cs:0,1});
\node[anchor=north west, font=\scriptsize, inner sep=2pt] at (rel axis cs:0,1) {(c)};
\end{groupplot}
\end{tikzpicture}
\caption{The stream run shows offered load, backpressure, and lag from one bounded AWS window. (a) The load generator cycled the synthetic replay fixture into provisioned Kinesis at 5~records/s, with a burst of 400~records/s from 60 to 135~s. Managed Flink consumed the stream. The run measured the pipeline under load and did not measure detection quality. (b) Backpressure on the source and key-by vertex reached 0.927 inside the gray burst window and 0.998 after it. It dropped to the dashed 0.1 line or below near 694~s. (c) CloudWatch reports lag at one-minute resolution, so each step holds a per-minute maximum. Both series peaked at a lag of 139~s and then drained to 0. The vertical lines mark a separate KEDA on EKS scale-from-zero test in the same window. The dashed line marks the scale request at 327.37~s. The first HTTP 200 came 1.89~s later, at 329.26~s, so the figure does not draw it as a separate line. The deployment returned to zero replicas at 918.24~s.}
\label{fig:streamrun}
\end{figure}

%% file: figures/fig_overview.tex
\begin{figure}[p]
\centering
\includegraphics[width=\textwidth]{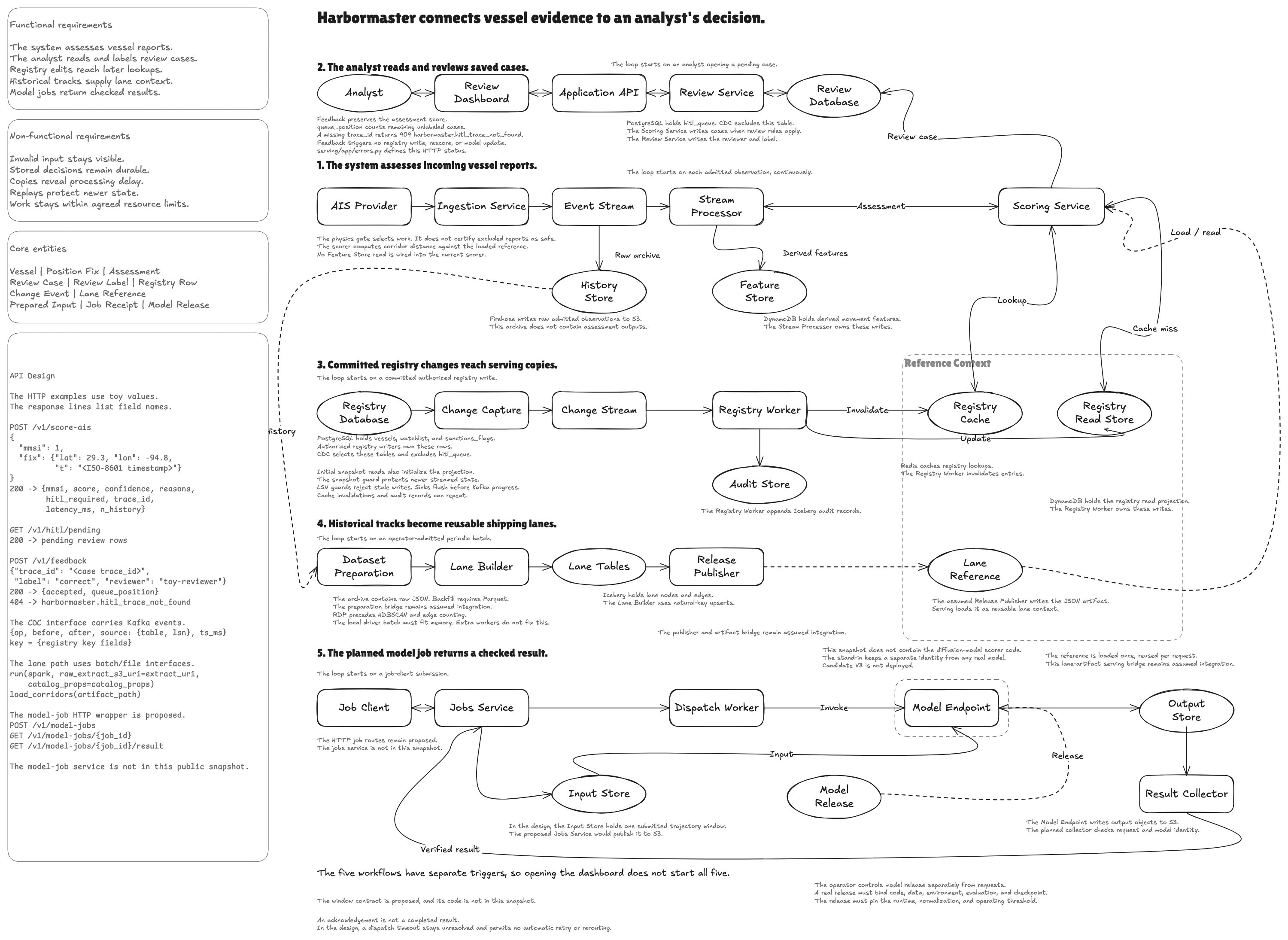}
\caption{The five-workflow board, adapted from the public repository, shows the integrated target design of Harbormaster. It is a design map, not a picture of a deployed system. Several paths on the board are labeled as proposed or as assumed integration, and the board's own notes mark what the public snapshot does not contain. The AIS Provider box is a design placeholder, and only the synthetic replay fixture ran. The board draws the Workflow~5 Jobs Service, Dispatch Worker, and Result Collector with solid lines, and none of them is built. Zoom the PDF to read the notes, and the source SVG is in the public repository.}
\label{fig:overview}
\end{figure}